\newtheorem{theorem}{\bf Theorem}
\newcommand{\bqn}{\begin{eqnarray}}
\newcommand{\eqn}{\end{eqnarray}}
\newcommand{\bq}{\begin{eqnarray*}}
\newcommand{\eq}{\end{eqnarray*}}
\begin{document}



\title{
Persistent Homology in Sparse Regression\\
 and Its Application to Brain Morphometry}
\author{Moo K.~Chung
\thanks{{\em Asterisk indicates corresponding author.} M.K. Chung  is with Department of Biostatistics and Medical Informatics and  Waisman Laboratory for Brain Imaging and Behavior, University of Wisconsin, Madison, WI 53706 USA (e-mail: mkchung@wisc.edu).  
Jamie L. Hanson is with the Laboratory of Neurogenetics, Duke University, Durham, NC 27710 USA.
Jieping Ye is with Computer Science and Engineering, Arizona State University.Tempe, AZ 85287 USA.
Richard J. Davidson and Seth D. Pollak are with Waisman Center, University of Wisconsin, Madison, WI 53705 USA. 
}, Jamie L. Hanson,
Jieping Ye,\\  
Richard J. Davidson, Seth D. Pollak
}

\maketitle

%

\begin{abstract}
Sparse systems are usually parameterized by a tuning parameter that determines the sparsity of the system. How to choose the right tuning parameter is a fundamental and difficult problem in learning the sparse system. In this paper, by treating the the tuning parameter as an additional dimension,  persistent homological structures over the parameter space is introduced and explored. The  structures are then further exploited in speeding up the computation using the proposed soft-thresholding technique. 
The topological structures are further used as multivariate features in the tensor-based morphometry (TBM) in characterizing white matter alterations in children who have experienced  severe early life stress and maltreatment. These analyses reveal that stress-exposed children exhibit more diffuse anatomical organization across the whole white matter region.

\end{abstract}


\section{Introduction}

In the usual tensor-based morphometry (TBM), the spatial derivatives of deformation fields obtained during nonlinear image registration for warping individual magnetic resonance imaging (MRI) data to a template is used in quantifying neuroanatomical shape variations \cite{ashburner.2000,thompson.2000,chung.2001.ni}. The Jacobian determinant of a deformation field is most frequently used 
in quantifying the brain tissue growth or atrophy at a voxel level. \cite{davatzikos.1996, machado.1998, dubb.2003} used the Jacobian determinant of the 2D deformation field as a measure of local area-change at each pixel in 2D cross-sections of the corpus callosum.  \cite{thompson.1999,chung.2001.ni} applied the Jacobian of 3D deformations as a measure of the regional growth. 
Subsequently, the statistical parametric maps are obtained by fitting the tensor maps as a response variable in a linear model at each voxel, which results in a massive number of univariate test statistics.

Recently, there have been attempts at explicitly modeling the structural variation of one region to another \cite{rao.2008,cao.1999.correlation,worsley.2005.neural, worsley.2005.royal, lerch.2006, he.2007, he.2008}. This provides additional information that complement existing univariate approaches. In most of these multivariate approaches, anatomical measurements such as mesh coordinates, cortical thickness or Jacobian determinant across different voxels are correlated using models such as canonical correlations \cite{avants.2010,rao.2008}, cross-correlations \cite{cao.1999.correlation,worsley.2005.neural, worsley.2005.royal, lerch.2006, he.2007, he.2008}, partial correlations, which are equivalent to the inverse of covariances \cite{bickel.2008,banerjee.2008,friedman.2008,huang.2009,lee.2011.tmi}. However, these multivariate techniques suffer the small-n large-p problem \cite{friston.1995.MGLM,schafer.2005,valdes.2005,lee.2011.tmi, chung.2013.MICCAI}. Specifically, when the number of voxels are substantially larger than the number of images, it produces an under-determined linear model. The estimated covariance matrix is rank deficient and no longer positive definite. In turn, the resulting correlation matrix is not considered as good approximations to the true correlation matrix.

The small-$n$ large-$p$ problem can be remedied by using sparse methods, which regularize the under-determined linear model with additional sparse penalties. There exist various sparse models: sparse correlation \cite{lee.2011.tmi,chung.2013.MICCAI}, sparse partial correlation \cite{bickel.2008,huang.2009,lee.2011.tmi}, sparse canonical correlation \cite{avants.2010} and L1-norm penalized log-likelihood \cite{banerjee.2006,banerjee.2008,friedman.2008, huang.2010,mazumder.2012,witten.2011}. Sparse model $\mathcal{A}(\lambda)$ is usually parameterized by a tuning parameter $\lambda$ that controls the sparsity of the representation. Increasing the sparse parameter makes the  solution more sparse. So far, all previous sparse network approaches use a fixed parameter $\lambda$ that may not be optimal. 
Depending on the choice of the sparse parameter, the final statistical results will be different. Instead of performing statistical inference at one fixed sparse parameter $\lambda$ that may not be optimal,  we introduce a new framework that performs statistical inferences over the whole parameter space using  persistent homology \cite{carlsson.2008,edelsbrunner.2002,singh.2008,ghrist.2008,chung.2009.MICCAI, 
chung.2013.MICCAI, lee.2011.MICCAI, lee.2012.tmi}. 

Persistent homology is a recently popular branch of computational topology with
applications in protein structures \cite{sacan.2007}, gene
expression \cite{dequeant.2008}, brain cortical thickness \cite{chung.2009.MICCAI}, activity patterns in visual cortex \cite{singh.2008}, sensor networks \cite{deSilva.2007}, complex networks \cite{horak.2009} and brain networks \cite{lee.2011.MICCAI,lee.2012.tmi}. 
However, as far as we are aware, it is yet to be applied to sparse models in any context. This is the first study that introduce persistent homology in sparse models. The proposed persistent homological framework is similar to the existing multi-thresholding framework that has been used in modeling connectivity matrices at many different thresholds \cite{achard.2006,he.2008,supekar.2008,lee.2012.tmi}. However, such approach has not been applied in sparse networks before. In a sparse network, sparsity is controlled by the sparse parameter $\lambda$ and the estimated sparse matrix entries. So it is unclear how the existing multi-thresholding framework can be applicable in this situation. In this paper, we establish that thresholding the sparse parameter is equivalent to thresholding correlations under some conditions. Thus, we resolve the unclarity of applying the existing multi-threshold method to the sparse networks.

The main methodological contributions of this paper are as follows.
(i) We introduce a new sparse model based on Pearson correlation. 
 Although various sparse models have been proposed for other correlations such as partial correlations \cite{bickel.2008,huang.2009,lee.2011.tmi} and canonical correlations  \cite{avants.2010},  the sparse version of the Pearson correlation was not often studied. 
 
(ii) We introduce persistent homology in the proposed sparse model for the first time. We explicitly show that persistent homological structures can be found in the sparse model. This paper  differs substantially from our previous study \cite{lee.2012.tmi}, which studies the persistent homology in graphs and networks. However, sparse models and sparse networks were never considered previously. 

(iii) We show that the identification of persistent homological structures can yield greater computational speed and efficiency in solving the proposed sparse Pearson correlation model without any numerical optimization. Note that most sparse models require numerical optimization for minimizing L1-norm penalty, which can be a computational bottleneck for solving large scale problems. There are few attempts at speeding up the computation for sparse models. By identifying block diagonal structures in the estimated (inverse) covariance matrix, it is possible to bypass the numerical optimization in the penalized log-likelihood method \cite{mazumder.2012,witten.2011}. LASSO (least absolute shrinkage and selection operator) can be done without numerical optimization if the design matrix is orthogonal \cite{tibshirani.1996}. The proposed method substantially differs from \cite{mazumder.2012,witten.2011} in that we do not need to assume the data to follow normality since there is no need to specify the likelihood function.  Further the cost functions we are optimizing are different. The proposed method also differs from \cite{tibshirani.1996} in that our problem is not orthogonal. 

As an application of the proposed method, we applied the techniques to the quantification of interregional white matter abnormality in stress-exposed children's magnetic resonance images (MRI). Early and severe childhood stress, such as experiences of abuse and neglect, have been associated with a range of cognitive deficits \cite{pollak.2008,sanchez.2009, loman.2013} and structural abnormalities \cite{jackowski.2009, hanson.2012, hanson.2013}. 
However, little is known about the underlying biological mechanisms leading to cognitive problems in these children \cite{pollak.2010} due to the difficulties in the existing methods that do not have enough discriminating power. However, we demonstrate that the proposed method is very well suited to this problem.


\section{Methods}

\subsection{Sparse Correlations}
\label{sec:SC}



{\em Correlations.}
Consider measurement vector ${\bf x}_j$ on node $j$. If we center and rescale the measurement ${\bf x}_j$ such that 
$$\parallel {\bf x}_j \parallel^2 = {\bf x}_{j}' {\bf x}_{j} =1,$$ 
the sample correlation between nodes $i$ and $j$ is given by
${\bf x}_{i}' {\bf x}_{j}$. Since the data is normalized, the sample covariance matrix is reduced to the sample correlation matrix. 

Consider the following linear regression between nodes $j$ and $k$ $(k \neq j)$:
\bqn {\bf x}_j= \gamma_{jk} {\bf x}_k + \epsilon_j. \label{eq:LRG}\eqn
We are basically correlating data at node $j$ to data at node $k$. In this particular case, $\gamma_{jk}$ is the usual Pearson correlation. The least squares estimation (LSE) of $\gamma_{jk}$ is then given by 
\bqn \widehat{\gamma}_{jk} = {\bf x}_j' {\bf x}_k, \label{eq:gamma}\eqn
which is the sample correlation. For the normalized data, regression coefficient estimation is exactly the sample correlation. For the normalized and centered data, the regression coefficient is the correlation. It can be shown that (\ref{eq:gamma}) minimizes the sum of least squares over all nodes:
\bqn \sum_{j=1}^p \sum_{k \neq j}
\parallel {\bf x}_{j}  - \gamma_{jk} {\bf x}_{k} \parallel^{2}. \label{eq:LRG2}\eqn
Note that we do not really care about correlating ${\bf x}_j$ to itself since the correlation is then trivially $\gamma_{jj}=1$. \\

{\em Sparse Correlations.}
Let ${\bf \Gamma} = (\gamma_{jk})$ be the correlation matrix. The sparse penalized version of (\ref{eq:LRG2}) is given by
\bqn 
\label{eq:lasso_corr}
F ({\bf \Gamma})= 
\frac{1}{2}\sum_{j=1}^p \sum_{k \neq j}
\parallel {\bf x}_{j}  - \gamma_{jk} {\bf x}_{k} \parallel^{2} + \lambda \sum_{j=1}^p \sum_{k \neq j}  | \gamma_{jk} |.
\eqn
The sparse correlation is given by minimizing $F ({\bf \Gamma})$. By increasing $\lambda$, the estimated correlation matrix $\widehat{\bf \Gamma}(\lambda)$ becomes more sparse. When $\lambda=0$, the sparse correlation is simply given by the sample correlation, i.e. $\widehat{\gamma}_{jk} = {\bf x}_j' {\bf x}_k$.  As $\lambda$ increases, the correlation matrix ${\bf \Gamma}$ shrinks to zero and becomes more sparse. 
This is separable compressed sensing or LASSO  type problem. 
Further, there is no need to numerically optimize (\ref{eq:lasso_corr})  using the coordinate descent learning or the active-set algorithm often used in compressed sensing  \cite{peng.2009,friedman.2008}. The minimization of (\ref{eq:lasso_corr}) can be done by the proposed soft-thresholding method analytically by exploiting the topological structure of the problem. This sparse regression is not orthogonal, i.e. ${\bf x}_{i}' {\bf x}_{j} \neq \delta_{ij}$, the Dirac delta, so the existing soft-thresholding method for LASSO \cite{tibshirani.1996} is not applicable.\\


\begin{theorem} 
For $\lambda \geq 0$, the solution of the following separable LASSO problem
$$\widehat{\gamma}_{jk}(\lambda) = \arg \min_{\gamma_{jk}}  \frac{1}{2}\sum_{j=1}^p  \sum_{k \neq j} \parallel {\bf x}_{j}  - \gamma_{jk} {\bf x}_{k} \parallel^{2} + \lambda \sum_{j=1}^p  \sum_{k \neq j} | \gamma_{jk} |,$$
is given by the soft-thresholding
 \bqn \widehat{\gamma}_{jk}(\lambda) 
 = \begin{cases} 
 {\bf x}_{j}' {\bf x}_{k}  - \lambda & \mbox{ if }   {\bf x}_{j}' {\bf x}_{k}  > \lambda \\
 0                                               & \mbox{ if }   |{\bf x}_{j}' {\bf x}_{k}|  \le \lambda \\ 
{\bf x}_{j}' {\bf x}_{k}  + \lambda & \mbox{ if }   {\bf x}_{j}' {\bf x}_{k}  < -\lambda
\end{cases}.
\label{eq:lambda-without2}
\eqn
\end{theorem}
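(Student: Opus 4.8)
The plan is to exploit the fact that the objective is completely \emph{separable} in the coefficients. Each $\gamma_{jk}$ occurs in exactly one summand of the quadratic term (the pair $(j,k)$) and in exactly one summand of the penalty, and there are no cross-terms coupling different coefficients. Hence minimizing $F({\bf \Gamma})$ is equivalent to independently minimizing, for each ordered pair $(j,k)$ with $k\neq j$, the scalar function
$$f(\gamma) = \frac{1}{2}\parallel {\bf x}_j - \gamma\,{\bf x}_k \parallel^2 + \lambda|\gamma|.$$
This decoupling is the first and conceptually essential step, and it is precisely what licenses the term ``separable LASSO.''

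Next I would use the normalization to bring $f$ into canonical form. Writing $c = {\bf x}_j' {\bf x}_k$ for the sample correlation and expanding the quadratic with $\parallel {\bf x}_k \parallel^2 = \parallel {\bf x}_j \parallel^2 = 1$ gives
$$\frac{1}{2}\parallel {\bf x}_j - \gamma\,{\bf x}_k \parallel^2 = \frac{1}{2}\bigl(1 - 2c\gamma + \gamma^2\bigr) = \frac{1}{2}(\gamma - c)^2 + \frac{1}{2}(1 - c^2).$$
Because the final constant does not depend on $\gamma$, minimizing $f$ is the same as minimizing $g(\gamma) = \frac{1}{2}(\gamma - c)^2 + \lambda|\gamma|$. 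The normalization $\parallel {\bf x}_k \parallel^2 = 1$ carries the weight here: it is what makes the coefficient of $\gamma^2$ equal to one, so that the threshold is exactly $\lambda$ rather than a rescaled quantity.

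Finally I would solve the strictly convex scalar problem $\min_\gamma g(\gamma)$ via the subdifferential optimality condition $0 \in \partial g(\widehat{\gamma})$, which characterizes the unique minimizer. For $\widehat{\gamma} \neq 0$ the function is differentiable and $\widehat{\gamma} - c + \lambda\,\mathrm{sign}(\widehat{\gamma}) = 0$, giving $\widehat{\gamma} = c - \lambda$, consistent with $\widehat{\gamma}>0$ precisely when $c > \lambda$, and $\widehat{\gamma} = c + \lambda$, consistent with $\widehat{\gamma}<0$ precisely when $c < -\lambda$. At $\widehat{\gamma}=0$ the subdifferential of $|\gamma|$ is the interval $[-1,1]$, so the condition becomes $0 \in [-c-\lambda,\,-c+\lambda]$, i.e. $|c| \leq \lambda$. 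Assembling the three cases and substituting $c = {\bf x}_j' {\bf x}_k$ reproduces formula (\ref{eq:lambda-without2}).

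I do not expect a serious obstacle. The only delicate point is the non-differentiability of $|\gamma|$ at the origin, which the subdifferential argument handles cleanly, together with verifying that separability genuinely holds (it does, since the penalty is additive and the squared residuals share no coefficients). In essence this is the classical LASSO soft-thresholding lemma applied coordinate-by-coordinate, with the normalization assumption collapsing each subproblem to the orthonormal-design case where that lemma is exact.
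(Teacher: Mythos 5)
Your proposal is correct and follows essentially the same route as the paper: decompose the objective componentwise by separability, use the normalization $\parallel {\bf x}_j \parallel = \parallel {\bf x}_k \parallel = 1$ to reduce each scalar subproblem to $\frac{1}{2}(\gamma - c)^2 + \lambda|\gamma| + \mathrm{const}$ with $c = {\bf x}_j'{\bf x}_k$, and read off the soft-thresholding solution from the optimality condition. If anything, your subdifferential treatment of the point $\gamma = 0$ is more rigorous than the paper's, which writes the stationarity condition with an informal $\pm 2\lambda$ term and never explicitly derives the middle case $|{\bf x}_j'{\bf x}_k| \leq \lambda$.
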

\begin{proof}
Write (\ref{eq:lasso_corr}) as
\bqn 
\label{eq:lasso_corr2}
F ({\bf \Gamma})= \frac{1}{2}\sum_{j=1}^p  \sum_{k \neq j} f(\gamma_{jk}),\eqn 
where 
$$f(\gamma_{jk}) = 
\parallel {\bf x}_{j}  - \gamma_{jk} {\bf x}_{k} \parallel^{2} + 2 \lambda  | \gamma_{jk} |.$$
Since $f(\gamma_{jk})$ is nonnegative and convex, $F ({\bf \Gamma})$ is minimum if each component $f(\gamma_{jk})$ achieves minimum. So we only need to minimize each component $f(\gamma_{jk})$.
This differentiates 
our sparse correlation formulation from the standard compressed sensing that cannot be optimized in this component wise fashion. $f(\gamma_{jk})$ can be rewritten as
\bq f(\gamma_{jk}) &=& 
\| {\bf x}_{j}\|^2   - 2 \gamma_{jk} {\bf x}_{j}' {\bf x}_{k} + \gamma_{jk}^2 \| {\bf x}_{k} \|^{2} + 2 \lambda  | \gamma_{jk} | \\
&=&(\gamma_{jk} - {\bf x}_{j}' {\bf x}_{k} )^2 + 2 \lambda  | \gamma_{jk} | +1.
\eq
We used the fact ${\bf x}_{j}' {\bf x}_{j} =1.$ 
  
For $\lambda=0$, the minimum of $f(\gamma_{jk})$ is achieved when
$\gamma_{jk} = {\bf x}_j' {\bf x}_k$, which is the usual LSE.
For $\lambda > 0$, 
Since $f(\gamma_{jk})$ is quadratic in $\gamma_{jk}$, the minimum is achieved when
\bqn 
\frac{\partial f}{\partial \gamma_{jk}} =  
 2 \gamma_{jk} -  2 {\bf x}_{j}' {\bf x}_{k}  \pm 2 \lambda =0 \label{eq:pm}\eqn
The sign of $\lambda$ depends on the sign of $\gamma_{jk}$. Thus, sparse correlation $\widehat{\gamma}_{jk}$ is given by a soft-thresholding of $ {\bf x}_{j}' {\bf x}_{k}$:
 \bqn \widehat{\gamma}_{jk}(\lambda) 
 = \begin{cases} 
 {\bf x}_{j}' {\bf x}_{k}  - \lambda & \mbox{ if }   {\bf x}_{j}' {\bf x}_{k}  > \lambda \\
0                                               & \mbox{ if }   |{\bf x}_{j}' {\bf x}_{k}|  \le \lambda \\ 
{\bf x}_{j}' {\bf x}_{k}  + \lambda & \mbox{ if }   {\bf x}_{j}' {\bf x}_{k}  < -\lambda
\end{cases}.
\label{eq:lambda-without2}
\eqn
\end{proof}


Theorem 1 is heuristically explained in the conference paper \cite{chung.2013.MICCAI} without the rigorous proof or statement. This paper extends \cite{chung.2013.MICCAI} with clearly spelled out soft-thresholding rule and the detailed proof. The  estimated sparse correlation (\ref{eq:lambda-without2}) basically thresholds the sample correlation that is larger or smaller than  $\lambda$ by the amount $\lambda$. Due to this simple expression,  there is no need to optimize (\ref{eq:lasso_corr}) numerically as often done in compressed sensing or LASSO \cite{peng.2009, friedman.2008}. However, Theorem 1 is only  applicable to separable cases and for non-separable cases, numerical optimization is still needed.

Since different choices of sparsity parameter $\lambda$ will produce different solutions in sparse model $\mathcal{A}(\lambda)$, we propose to use the collection of  all the sparse solutions for many different values of  $\lambda$ for the subsequent statistical inference. This avoids the problem of identifying the optimal sparse parameter that may not be optimal in practice. The question is then how to use the collection of $\mathcal{A}(\lambda)$ in a coherent mathematical fashion. For this, we propose 
to apply persistent homology  \cite{edelsbrunner.2008,lee.2011.MICCAI,lee.2012.tmi} and establish Theorem 2.

\subsection{Persistent  Homology in Graphs}
\label{sec:PHG}

Using persistent homology, topological features such as the connected components and cycles of a graph can be tabulated in terms of the Betti numbers. The zeroth Betti number $\beta_0$ and the first Betti number $\beta_1$, which are topological invariants, respectively counts the the number of connected components and holes in the graph \cite{edelsbrunner.2002}. The network difference is then quantified using the Betti numbers of the graph \cite{lee.2011.MICCAI, lee.2012.tmi}.  The graph filtration is a new graph simplification technique that iteratively builds a nested subgraphs of the original graph. The algorithm simplifies a complex graph by piecing together the patches of locally connected nearest nodes. The process of graph filtration is mathematically equivalent to the single linkage hierarchical clustering and dendrogram construction \cite{lee.2011.MICCAI, lee.2012.tmi}.

\begin{figure}[t]
\begin{center}
\includegraphics[width=1\linewidth]{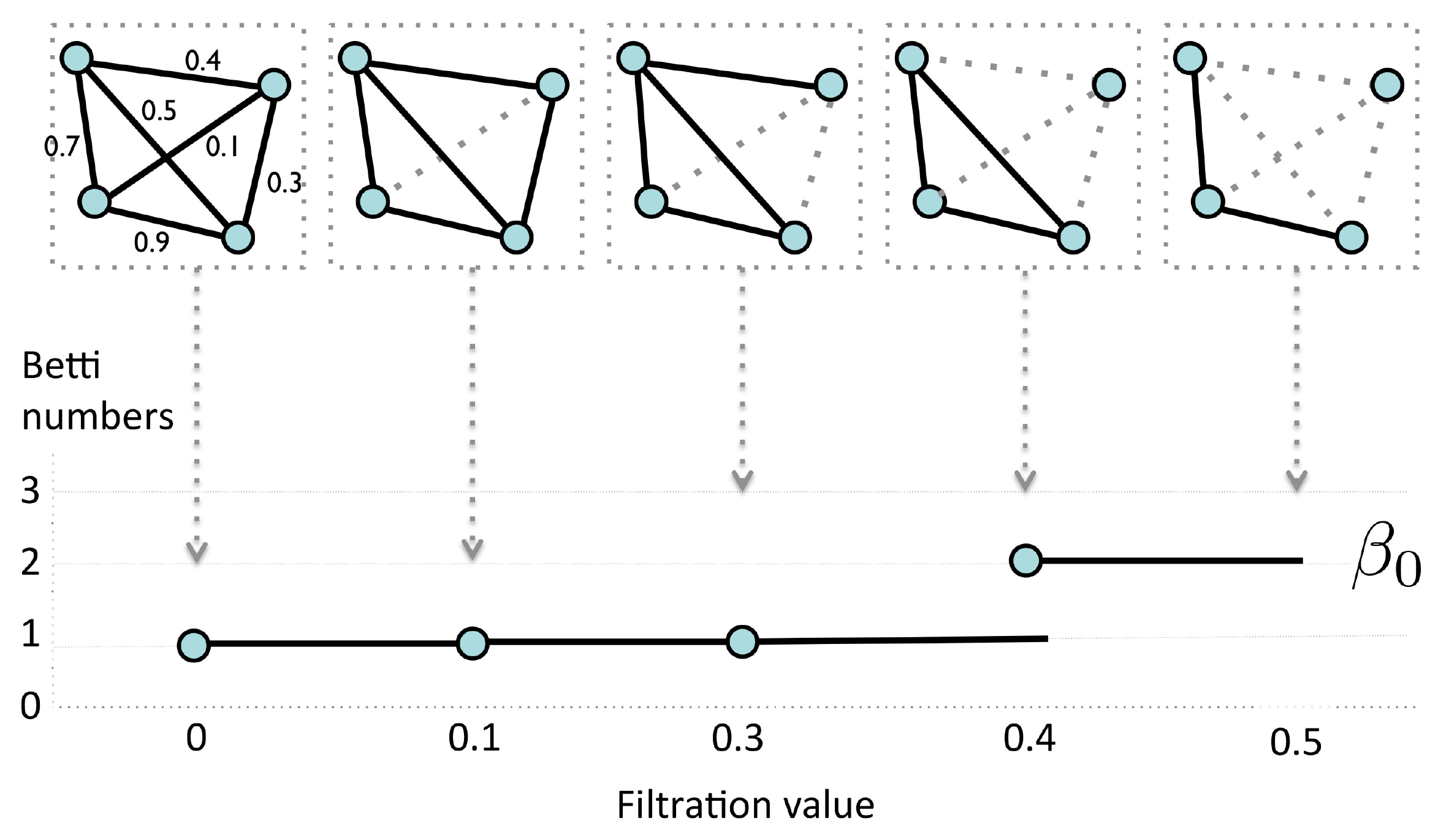}
\caption{Schematic of  graph filtration. We start with a weighted graph (top left). 
We sort the edge weights in an increasing order. We threshold the graph at filtration value $\lambda$ and obtain unweighted binary graph $\mathcal{G}(\lambda)$ based on rule (\ref{eq:binarythreshold}). The thresholding is performed sequentially by increasing $\lambda$ values. Then we obtain the sequence of nested graphs such as $\mathcal{G}(0.0) \supset \mathcal{G}(0.1) \supset \mathcal{G}(0.3) \supset \cdots$. The collection of such nested graph is defined as a graph filtration. The dotted lines are thresholded edges. 
The first Betti number $\beta_0$, which counts the number of connected components, is then plotted over the filtration.}
\label{fig:connectivity-weighted}
\end{center}
\end{figure}

Consider a general setting of a weighted graph with node set $V = \left\{ 1, \dots, p \right\}$ and edge weights $\rho=(\rho_{jk})$, where $\rho_{jk}$ is the weight between nodes $j$ and $k$. Weighted graph $X = (V, \rho)$ is formed by the pair of node set $V$ and edge weights $\rho$. The edge weights in many brain imaging applications are usually given by some similarity measures such as correlation or covariance between nodes \cite{lee.2011.MICCAI,li.2009,mcintosh.1994,newman.1999,song.2005}. 


Given a weighted network $X=(V, \rho)$, we induce binary network $\mathcal{G}(\lambda)$ by thresholding the weighted network at $\lambda$. The adjacency matrix $A=(a_{jk})$ of $\mathcal{G}(\lambda)$ is defined as 
\bqn a_{jk}(\lambda) = 
\begin{cases}
1 &\; \mbox{  if  }  \rho_{jk} > \lambda;\\
0 & \; \mbox{ otherwise.}
\end{cases}
\label{eq:binarythreshold}
\eqn
Any edge weight less than or equal to $\lambda$ is made into zero while edge weight larger than $\lambda$ is made into one. The binary network  $\mathcal{G}(\lambda)$ is a simplicial complex consisting of $0$-simplices (nodes) and $1$-simplices (edges), a special  case of the Rips complex \cite{ghrist.2008}. Then it can be easily seen that $\mathcal{G}(\lambda_1)  \supset \mathcal{G}(\lambda_2)$
for $\lambda_1 < \lambda_2$ in a sense the vertex and edge sets of $\mathcal{G}(\lambda_2)$ are contained in those of $\mathcal{G}(\lambda_1)$. 
Therefore, just as in the case of Rips filtration, which is a collection of nested Rips complexes, we can construct the filtration on the collection of binary networks:
\bqn \mathcal{G}(\lambda_0)  \supset \mathcal{G}(\lambda_1)\supset \mathcal{G}(\lambda_2) \supset \cdots  \label{eq:Gfiltration} \eqn
for $0=\lambda_0 < \lambda_1 < \lambda_2 < \cdots.$ 
Note that $\mathcal{G}(0)$ is the complete weighted graph while $\mathcal{G}(\infty)$ gives the  node set $V$. By increasing the $\lambda$ value, we are thresholding at higher correlation so more edges are removed and thin out the connections. Such the nested sequence of the Rips complexes (\ref{eq:Gfiltration}) is called a Rips filtration, the main object of interest in  persistent homology \cite{edelsbrunner.2008}. The sequence of $\lambda$ values are called the filtration values. Since we are dealing with a special case of Rips complexes restricted to graphs, we will call such filtration {\em graph filtration}.  Figure \ref{fig:connectivity-weighted} illustrates an example of a graph filtration with 4 nodes. Sequentially we are deleting edges based on the ordering of the edge weights. Since the graph filtration is a special case of the Rips filtration, it inherits all the topological properties of the Rips filtration. Given a weighted graph, there are infinitely many different filtrations. In Figure \ref{fig:connectivity-weighted} example, we have two filtrations $\mathcal{G}(0.0) \supset \mathcal{G}(0.1) \supset \mathcal{G}(0.3) \supset  \mathcal{G}(0.4) \supset  \mathcal{G}(0.5) $
 and $\mathcal{G}(0.0) \supset \mathcal{G}(0.2) \supset \mathcal{G}(0.6)$ among many other possiblities. 
So a question naturally arises if there is a unique filtration that can be used in characterizing the graph. Let the {\em level of a filtration} be the number of nested unique sublevel sets in the given filtration.

\begin{theorem} 
\label{theorem:maximal}
For graph $X=(V, \rho)$ with $q$ unique edge weights, the maximum level of a filtration on the graph is $q+1$. Further, the  filtration with $q+1$ filtration level is unique.
\end{theorem}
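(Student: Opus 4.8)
The plan is to reduce the statement to the elementary observation that the thresholded graph $\mathcal{G}(\lambda)$ defined in (\ref{eq:binarythreshold}) is a step function of $\lambda$ whose only jumps occur at the unique edge weights. First I would sort the $q$ distinct edge weights as $w_1 < w_2 < \cdots < w_q$ and partition the parameter range $[0,\infty)$ into the $q+1$ half-open intervals $[0,w_1), [w_1,w_2), \dots, [w_{q-1},w_q), [w_q,\infty)$. On the interior of each such interval no $\rho_{jk}$ crosses $\lambda$, so by the thresholding rule (\ref{eq:binarythreshold}) the adjacency matrix $A(\lambda)$, and hence $\mathcal{G}(\lambda)$, is constant there.

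Next I would argue that these $q+1$ constant values are pairwise distinct. As $\lambda$ increases past a weight $w_i$, exactly the nonempty set of edges whose weight equals $w_i$ switches from $1$ to $0$, so the graph strictly loses at least one edge at every breakpoint. This yields a strictly decreasing chain $G_0 \supsetneq G_1 \supsetneq \cdots \supsetneq G_q$ of $q+1$ distinct binary graphs, and these are the only graphs attainable as $\mathcal{G}(\lambda)$ for $\lambda \ge 0$. Since any filtration (\ref{eq:Gfiltration}) is by definition a nested subcollection of $\{\mathcal{G}(\lambda) : \lambda \ge 0\}$, its number of distinct members, i.e. its level, is at most $q+1$. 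Choosing one filtration value from each interval, for instance $\lambda = 0, w_1, \dots, w_q$, realizes all $q+1$ graphs and shows the bound is attained, establishing the first claim.

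For the uniqueness claim I would observe that a filtration of level $q+1$ must contain $q+1$ distinct graphs, and since only $q+1$ distinct graphs exist it must contain all of $G_0, \dots, G_q$. Because these graphs are totally ordered by inclusion, the nesting requirement of (\ref{eq:Gfiltration}) forces them to appear in the single order $G_0 \supset G_1 \supset \cdots \supset G_q$, so the sequence of sublevel sets is uniquely determined.

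The step that needs the most care, which is a conceptual subtlety rather than a computational obstacle, is pinning down the correct sense of \emph{unique}. The filtration values themselves are not unique, since any $\lambda$ taken from the interior of a fixed interval produces the same binary graph; what is unique is the induced sequence of distinct sublevel sets. I would therefore phrase uniqueness at the level of the graphs $G_i$ rather than of the thresholds $\lambda_i$. A secondary point to verify is the boundary behaviour at $\lambda = 0$: for $\mathcal{G}(0)$ to coincide with the complete graph $G_0$ one needs every edge weight to be positive, or more generally one simply counts the distinct values of $\mathcal{G}(\lambda)$ for $\lambda \ge 0$, which I would note explicitly so that the count of levels comes out to exactly $q+1$.
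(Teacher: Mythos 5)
Your proposal is correct and follows essentially the same route as the paper: sort the $q$ distinct weights, partition $[0,\infty)$ into the $q+1$ induced intervals, and observe that $\mathcal{G}(\lambda)$ is constant on each interval, so the sorted weights give the maximal filtration. In fact your write-up is somewhat more complete than the paper's own proof, which stops after exhibiting the maximal filtration; your explicit arguments that the $q+1$ graphs are pairwise distinct (each breakpoint strictly removes at least one edge) and that a level-$(q+1)$ filtration must contain all of them in their unique inclusion order supply the uniqueness claim that the paper asserts but does not actually verify.
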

\begin{proof}
For a graph with $p$ nodes, the maximum number of edges is $(p^2-p)/2$, which is obtained in a complete graph. If we order the edge weights in the increasing order, we have the sorted edge weights:
$$0 = \rho_{(0)} <  \min_{j,k} \rho_{jk} = \rho_{(1)} < \rho_{(2)} < \cdots < \rho_{(q)} = \max_{j,k} \rho_{jk},$$
where $q \leq (p^2-p)/2$.  The subscript $_{( \;)}$ denotes the order statistic. 
For all $\lambda < \rho_{(1)}$, $\mathcal{G}(\lambda) = \mathcal{G}(0)$ is the complete graph of $V$. For all $\rho_{(r)}  \leq \lambda < \rho_{(r+1)} \; (r =1, \cdots, q-1)$, $\mathcal{G}(\lambda) = \mathcal{G}(\rho(r))$. For all $ \rho_{(q)} \leq \lambda$, $\mathcal{G}(\lambda) = \mathcal{G}(\rho_{(q)}) =V$, the vertex set. Hence, the filtration given by
\bqn  \mathcal{G}(0)  \supset  \mathcal{G}(\rho_{(1)})  \supset  \mathcal{G}(\rho_{(2)})  \supset \cdots  \supset  \mathcal{G}(\rho_{(q)})\label{eq:maximal}\eqn
is {\em maximal} in a sense that we cannot have any additional level of filtration. 
\end{proof}

Among many possible filtrations, we will use the maximal filtration (\ref{eq:maximal}) in the study since it is uniquely given. The finiteness and uniqueness of the filtration levels over finite graphs are intutively clear by themselves and are already applied in software packages such as javaPlex. \cite{adams.2014}. However, we still need a rigorous statement to specify the type of filtration we are using out of many.


\subsection{Persistent Homology in Sparse Regression}

\begin{figure}[t]
\centering
\includegraphics[width=1\linewidth]{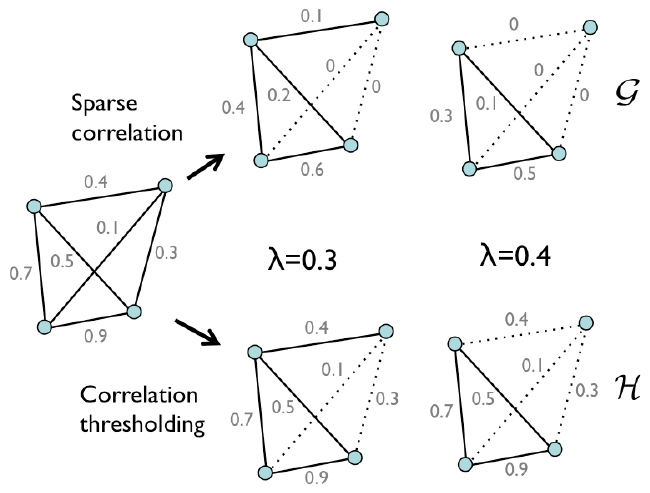}
\caption{Comparison between the sparse correlation estimation via numerical optimization (top) and the proposed soft-thresholding method (bottom). The direct numerical optimization makes the graph sparse by shrinking the edge weights to zero. Nonzero edges form binary graph $\mathcal{G}$. The soft-thresholding method thresholds the sample correlations at given filtration value and construct binary graph $\mathcal{H}$. The both methods produce the identical binary graphs, i.e. $\mathcal{G} = \mathcal{H}$.  
If the methods are applied  at two different parameters $\lambda = 0.3, 0.4$, we obtain nested binary graphs $\mathcal{G}(0.3) \supset \mathcal{G}(0.4)$ and $\mathcal{H}(0.3) \supset \mathcal{H}(0.4)$. Theorem 3 generalizes this example.
}
\label{fig:jacobian-schematiccorr}
\end{figure}

We introduce a persistent homological structure in sparse correlations now as follows. 
Let $A = (a_{jk}(\lambda))$ be the adjacency matrix obtained from sparse correlation (\ref{eq:lambda-without2}):
$$a_{jk}(\lambda) = 
\begin{cases}
1 &\; \mbox{  if  }  \widehat{\gamma}_{jk}(\lambda)  \neq 0;\\
0 & \; \mbox{ otherwise.} 
\end{cases}
$$
Let  $\mathcal{G}(\lambda)$ be the graph defined by the adjacency matrix $A$. Then we have the main result of this paper, which relies on the results of Theorem 1 and Theorem 2.

\begin{theorem}
\label{theorem:PHG}
For centered and normalized data ${\bf x}_j \; (j=1, \cdots, p)$, $\rho_{(1)}, \rho_{(2)}, \cdots, \rho_{(q)}$ be the order statistic of $|{\bf x}_j'{\bf x}_k| \; (1 \leq j,k \leq p, k\neq j)$, i.e. the sorted sequence of $|{\bf x}_j'{\bf x}_k|$ in increasing order. Then graph  $\mathcal{G}(\lambda)$ obtained from the sparse regression (\ref{eq:lasso_corr}) forms the maximal graph filtration
\bqn  \mathcal{G}(0) \supset \mathcal{G}(\rho_{(1)}) \supset \mathcal{G}(\rho_{(2)}) \supset \cdots  \supset  \mathcal{G}(\rho_{(q)}). 
\label{eq:PHG}
\eqn 
\end{theorem}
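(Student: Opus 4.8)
The plan is to reduce Theorem~\ref{theorem:PHG} to Theorem~\ref{theorem:maximal} by exhibiting the correct weighted graph and showing that the sparse-regression graph $\mathcal{G}(\lambda)$ is nothing more than the binary thresholding of that graph. The bridge is the soft-thresholding formula of Theorem~1: the support of the sparse correlation matrix changes with $\lambda$ in exactly the manner prescribed by the binary thresholding rule (\ref{eq:binarythreshold}), once the edge weights are chosen correctly.

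First I would read off from the soft-thresholding solution (\ref{eq:lambda-without2}) precisely when the sparse estimate vanishes. The middle case gives $\widehat{\gamma}_{jk}(\lambda)=0$ exactly when $|{\bf x}_j' {\bf x}_k| \le \lambda$, while either outer case gives $\widehat{\gamma}_{jk}(\lambda)\neq 0$ exactly when $|{\bf x}_j' {\bf x}_k| > \lambda$. Hence the adjacency matrix obtained from the sparse regression satisfies $a_{jk}(\lambda)=1$ if and only if $|{\bf x}_j' {\bf x}_k| > \lambda$. This identification is the whole content of the argument; everything after it is bookkeeping.

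Next I would define edge weights $\rho_{jk}=|{\bf x}_j' {\bf x}_k|$ and regard $X=(V,\rho)$ as a weighted graph on $V=\{1,\dots,p\}$. With this choice the characterization of $a_{jk}(\lambda)$ just obtained is identical, term for term, to the binary thresholding rule (\ref{eq:binarythreshold}). Therefore, for every $\lambda\ge 0$, the graph $\mathcal{G}(\lambda)$ coming from the sparse regression coincides with the binary graph obtained by thresholding the weighted graph $X$ at level $\lambda$. In particular, since $\rho_{(1)}=\min_{j\neq k}|{\bf x}_j' {\bf x}_k|>0$ generically, $\mathcal{G}(0)$ is the complete graph on $V$, matching the starting point of the maximal filtration. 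Applying Theorem~\ref{theorem:maximal} to $X$, whose $q$ distinct edge weights are exactly the order statistic $\rho_{(1)}<\cdots<\rho_{(q)}$ of the statement, yields the unique maximal filtration (\ref{eq:maximal}), which for these weights is precisely (\ref{eq:PHG}).

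The step deserving the most care is the identification of the edge weights. Because the soft-thresholding in Theorem~1 zeros out an entry according to the \emph{absolute} sample correlation $|{\bf x}_j' {\bf x}_k|$ rather than the signed value, the weighted graph that must be fed into Theorem~\ref{theorem:maximal} is the one with weights $|{\bf x}_j' {\bf x}_k|$, not ${\bf x}_j' {\bf x}_k$. Pinning this down is exactly what makes ``thresholding the sparse parameter'' coincide with ``thresholding the absolute correlation,'' the equivalence promised in the introduction. Once this is settled, the conclusion is a direct transcription of Theorem~\ref{theorem:maximal}; no further estimation or topological argument is required.
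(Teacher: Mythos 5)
Your proposal is correct and follows essentially the same route as the paper's own proof: both use Theorem~1's soft-thresholding to show that the support of $\widehat{\gamma}_{jk}(\lambda)$ is exactly $\{(j,k): |{\bf x}_j'{\bf x}_k| > \lambda\}$, identify $\mathcal{G}(\lambda)$ with the binary graph obtained by thresholding the weighted graph whose edge weights are the absolute sample correlations, and then invoke Theorem~\ref{theorem:maximal} to obtain the maximal filtration. Your emphasis on using $|{\bf x}_j'{\bf x}_k|$ rather than the signed correlations as the edge weights is precisely the point the paper makes when it introduces the auxiliary adjacency matrix $B$ and graph $\mathcal{H}$.
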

\begin{proof}
The proof follows by simplifying the adjacency matrix $A$ into a simpler but equivalent adjacency matrix $B=(b_{jk})$. From Theorem 1, $\widehat{\gamma}_{jk} (0) \neq 0$ if $ |{\bf x}_{j}' {\bf x}_{k}| > \lambda$ and 0 otherwise. Thus, the adjacency matrix $A$ is equivalent to the adjacency matrix $B=(b_{jk})$:
\bqn b_{jk}(\lambda) = 
\begin{cases}
1 &\; \mbox{  if } |{\bf x}_j'{\bf x}_k | > \lambda;\\
0 & \; \mbox{ otherwise.}
\end{cases} \label{eq:Bcases}\eqn
Let $\mathcal{H}(\lambda)$ be the graph defined by adjacency matrix $B$. Graph $\mathcal{H}(\lambda)$ is formed by thresholding edge weights given by the absolute value of sample correlations ${\bf x}_j'{\bf x}_k$. From Theorem 2, 
such graph must have maximal filtration:
\bqn \mathcal{H}(0) \supset \mathcal{H}(\rho_{(1)}) \supset \mathcal{H}(\rho_{(2)}) \supset \cdots  \supset  \mathcal{H}(\rho_{(q)}). 
\label{eq:PHG2}
\eqn 
Since $A=B$, graph $\mathcal{G}$ also must have the identical maximal filtration. This proves the statement.
\end{proof}

Theorem \ref{theorem:PHG} is illustrated in  Figure \ref{fig:jacobian-schematiccorr}. In obtaining the topological structure of a graph induced by sparse correlation, it is not necessary to solve the sparse regression by the direct optimization, which can be very time consuming. Identical topological information can be obtained by performing the soft-thresholding on the sample correlations. Figure \ref{fig:jacobian-schematiccorr} illustrates how Theorems \ref{theorem:PHG} is used to construct a sparse correlation network using a 4-nodes example. In the application, $p=548$ nodes will be used.

The resulting  maximal filtration can be quantified by plotting the change of Betti numbers over increasing filtration values \cite{edelsbrunner.2002,ghrist.2008, lee.2011.MICCAI}. The first Betti number $\beta_0(\lambda)$ counts the number of connected components of the given graph $\mathcal{G}(\lambda)$ at the filtration value $\lambda$ \cite{lee.2012.tmi}. Given graph filtration $\mathcal{G}(\lambda_0) \supset  \mathcal{G}(\lambda_1) \supset \mathcal{G}(\lambda_2) \supset \cdots$, we plot the first Betti numbers $\beta_0(\lambda_0) < \beta_0(\lambda_1)  < \beta_0(\lambda_2) \cdots $ over filtration values $\lambda_0 < \lambda_1 < \lambda_2 \cdots$ (Figure \ref{fig:connectivity-weighted}). The number of connected components increase as the filtration value increases. The pattern of increasing number of connected components visually show how the graph structure changes over different parameter values. The overall pattern of  Betti (number) plots can be used as a summary measure of quantifying how the graph changes over increasing edge weights. The Betti number plots are related but different from {\em barcodes} in litearture. The Betti number is equal to the number of bars in the barcodes at the specific filtration value. It is not necessary to perform filtrations for infinitely many possible $\lambda$ values in plotting the Betti numbers. From Theorem \ref{theorem:maximal}, the maximum possible number of filtration level for  plotting the Betti numbers is $q+1$, where $q$ is the number of unique edge weights. For a tree, which is a graph with no cycle, we can come up with a much stronger statement than this.

\begin{theorem}
\label{theorem:barcodes}
For a tree with $p \geq 2$ nodes and unique positive edge weights $\rho_{(1)} < \rho_{(2)} < \cdots < \rho_{(p-1)}$, the plot for the first Betti number ($\beta_0$) corresponding to the maximal graph filtration is given by the coordinates
$$(0, 1), (\rho_{(1)}, 2), \cdots,  (\rho_{(2)}, 3), (\rho_{(p-1)}, p), (\infty, p).$$
\end{theorem}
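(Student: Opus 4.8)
The plan is to read the first Betti number directly off the maximal filtration guaranteed by Theorem \ref{theorem:maximal}, exploiting the fact that every threshold subgraph of a tree is a forest. First I would record the elementary combinatorics: a tree on $p$ nodes has exactly $p-1$ edges, and since the $p-1$ edge weights are distinct by hypothesis, the order statistic $\rho_{(1)} < \cdots < \rho_{(p-1)}$ exhausts all of them. By Theorem \ref{theorem:maximal} the maximal filtration therefore has exactly $p$ distinct levels, one for each half-open interval cut out by consecutive weights, so it suffices to determine the constant value of $\beta_0$ on each such interval.

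Next I would identify the surviving subgraph level by level. Fix $r$ with $0 \le r \le p-1$ and let $\lambda$ satisfy $\rho_{(r)} \le \lambda < \rho_{(r+1)}$, with the conventions $\rho_{(0)} = 0$ and $\rho_{(p)} = \infty$. By the thresholding rule (\ref{eq:binarythreshold}) an edge is retained in $\mathcal{G}(\lambda)$ if and only if its weight is strictly greater than $\lambda$, so the surviving edges are precisely those of weight $\rho_{(r+1)}, \ldots, \rho_{(p-1)}$, numbering $p-1-r$. Because the vertex set is never thinned, $\mathcal{G}(\lambda)$ is a subgraph of the original tree on all $p$ vertices, hence acyclic, i.e.\ a forest.

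The crucial step is the component count for a forest. For any graph the Euler relation gives $\beta_0 - \beta_1 = \#\text{vertices} - \#\text{edges}$, and a forest has $\beta_1 = 0$; equivalently, every edge of a forest is a bridge, so deleting it raises the number of connected components by exactly one. Either way, $\beta_0(\lambda) = p - (p-1-r) = r+1$ throughout $\rho_{(r)} \le \lambda < \rho_{(r+1)}$. Thus $\beta_0$ begins at $1$ for $\lambda < \rho_{(1)}$ and jumps upward by one at each filtration value $\rho_{(r)}$, reaching $p$ once $\lambda \ge \rho_{(p-1)}$. Reading the value at the left endpoint of each level yields the coordinates $(0,1), (\rho_{(1)}, 2), (\rho_{(2)}, 3), \ldots, (\rho_{(p-1)}, p), (\infty, p)$, as claimed.

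I do not expect a genuine obstacle here; the content is light once Theorem \ref{theorem:maximal} and the forest formula are in place. The one point demanding care is the strict-versus-nonstrict inequality: since an edge is kept if and only if $\rho_{jk} > \lambda$, the edge of weight $\rho_{(r)}$ is already absent at $\lambda = \rho_{(r)}$, which is exactly why the jump to $\beta_0 = r+1$ is recorded at the coordinate $(\rho_{(r)}, r+1)$ and not immediately afterward. Keeping the indexing and the half-open intervals consistent is essentially the whole of the bookkeeping.
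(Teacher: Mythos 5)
Your proposal is correct and takes essentially the same route as the paper's proof: invoke Theorem \ref{theorem:maximal} to get the maximal filtration, then track $\beta_0$ level by level using the fact that in a tree each thresholded edge increases the number of connected components by exactly one. The only difference is that you explicitly justify this key fact (via the forest relation $\beta_0 = \#\text{vertices} - \#\text{edges}$, equivalently the bridge property), whereas the paper simply asserts it.
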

\begin{proof}
For a tree with $p$ nodes, there are total $p-1$ edges.  Then from Theorem \ref{theorem:maximal}, we have the maximal filtration 
\bqn  \mathcal{G}(\rho_{(0)})  \supset  \mathcal{G}(\rho_{(1)})  \supset  \mathcal{G}(\rho_{(2)})  \supset \cdots  \supset  \mathcal{G}(\rho_{(p-1)}).\label{eq:maximal2}\eqn
Since all the edge weights are above filtration value $\rho_{(0)}=0$, all the nodes are connected, i.e., 
$\beta_0(\rho_{(0)}) = 1$. Since no edge weight is above the threshold  $\rho_{(q-1)}$, $\beta_0( \rho_{(p-1)}) = p$. 
At each time we threshold an edge, the number of components increases exactly by one in the tree. Thus,
we have 
$$\beta_0(\rho_{(1)}) = 2, \beta_0(\rho_{(2)}) = 3, \cdots, \beta_0(\rho_{(p-1)}) = p.$$ 
\end{proof}
For a general graph, it is not possible to analytically determine the coordinates for its Betti-plot. The best we can do is to compute the number of connected components $\beta_0$ numerically using the single linkage dendrogram method (SLD) \cite{lee.2012.tmi}, the Dulmage-Mendelsohn decomposition \cite{pothen.1990, chung.2011.spie} or existing simplical complex approach \cite{deSilva.2007,carlsson.2008,edelsbrunner.2002}. For our study, we used the SLD method.

\begin{figure*}[t]
\centering
\includegraphics[width=1\linewidth]{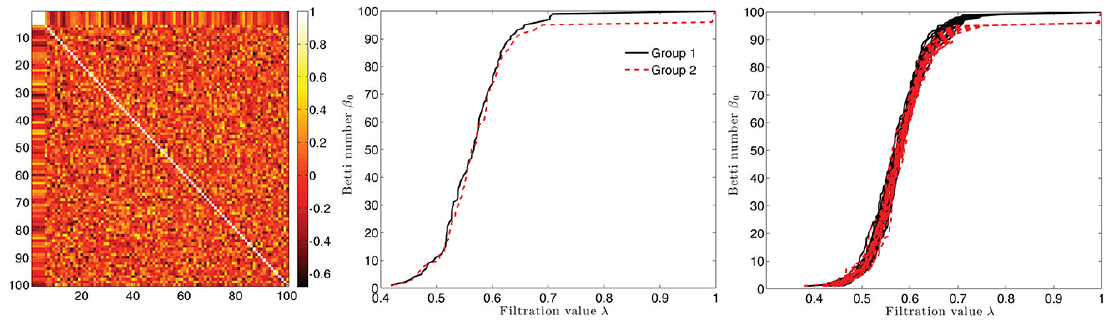}
\caption{Simulation study. Left: the simulated correlation matrix for Group 2, where the first 5 nodes are connected (white square). Group 1 has no connection. Middle: The resulting $\beta_0$-plot showing group difference. Right: Leave-one-out Jackknife resampled $\beta_0$-plots of Group 1(solid line) and Group 2 (dotted line).
Then the rank-sum test is performed on the area differences under $\beta_0$-curves between the groups ($p$-value $<$ 0.001). The statistically significant result corresponds to the horizontal gap in the Betti numbers after filtration value 0.7.} 
\label{fig:jackknife}
\end{figure*}

\subsection{Statistical Inference on Betti number plots}

The first Betti number will  be used as features for characterizing network differences statistically. We assume there are $n$ subjects and $p$ nodes in Group 1. For subject $i$, we have measurement $x_{ij}$ at node $j$. Denote data matrix as $X = (x_{ij})$, where $x_{ij}$ is the measurement for subject $i$ at node $j$. We then construct a sparse network and corresponding Betti number $\beta_{0}^1(\lambda)$ using $X$. Thus,  $\beta_{0}^1(\lambda)$ is a function of $X$. Consider another Group 2 consists of $m$ subjects. For Group 2, data matrix is denoted as $Y = (y_{ij})$, where $y_{ij}$ is the measurement for subject $i$ at node $j$. Group 2 will also generate single Betti number plot $\beta_{0}^2(\lambda)$ as a function of $Y$. We are then interested in testing if the shapes of Betti number plots are different between the groups. This can be done by comparing the areas under the Betti plots. So the null hypothesis of interest is 
\bqn H_0: \int_{0}^{1} \beta_0^1 (\lambda) \;d\lambda = \int_{0}^{1} \beta_0^2 (\lambda) \; d\lambda \label{eq:H0}\eqn
while the alternate hypothesis is
$$H_1: \int_{0}^{1} \beta_0^1 (\lambda) \;d\lambda \neq \int_{0}^{1} \beta_0^2 (\lambda) \; d\lambda.$$
This inference avoids the use of multiple comparisons. The null hypothesis (\ref{eq:H0}) is related to the following pointwise null hypothesis:
\bqn H_0':  \beta_0^1 (\lambda)  =  \beta_0^2 (\lambda) \mbox{ for all } \lambda \in [0,1] \label{eq:H0'} \eqn
If the hypothesis (\ref{eq:H0'}) is true, the hypothesis (\ref{eq:H0}) is also true (but inverse is not true). Thus, testing the area under the curve is related to testing the height of the curve at every point. The advantage of using the area under the curve is that we do not need to worry about multiple comparisons associated with testing (\ref{eq:H0'}).  The area under the curve seems a reasonable approach to use for Betti-plots. A similar approach has been introduced in \cite{chung.2001.thesis} in removing the multiple comparisons and produce a single summary test statistic.

There is no prior study on the statistical distribution on the Betti numbers so it is difficult to construct a parametric test procedure. Further, since there is only one Betti-plot per group, it is not even possible to construct a statistic without resampling techniques.  So it is necessary to empirically construct the null distribution and determine the p-value by  resampling techniques such as the permutation test and jackknife \cite{chung.2013.MICCAI,chung.2013.SCM,efron.1982,lee.2012.tmi}. For this study, we use the jackknife resampling. 

For Group 1 with $n$ subjects, one subject is removed at a time and the remaining $n-1$ subjects are used in constructing a network and a Betti-plot. 
Let $X_{-l}$ be the data matrix, where the $l$-th row (subject) is removed from $X$. 
Then for each $l$-th subject removed, we compute $\beta_0^{1(-l)}$, which is a function of $\lambda$ and $X_{-l}$. Repeating this process for each subject, we obtain $n$  Betti-plots $\beta_0^{1(-1)}, \beta_0^{1(-2)}, \cdots, \beta_0^{1(-n)}$. 
For Group 2, the $l$-th row (subject) is removed from the original data matrix $Y$ and obtain data matrix as $Y_{-l}$. For each $l$-th subject removed, we compute $\beta_0^{2(-l)}$, which is a function of $\lambda$ and $X_{-l}$. Repeating this process for each subject, we obtain $m$ Betti-plots $\beta_0^{2(-1)}, \beta_0^{2(-2)}, \cdots, \beta_0^{2(-m)}$. There are 23 maltreated and 31 control children in the study, so we have 23 and 31 Jackknife resampled Betti-plots. Subsequently we compute the areas under the Betti-plots by discretizing the integral and doing the Riemann sum. The area differences between the groups are then tested using the Wilcoxon rank-sum test, which is a nonparametric test on median differences \cite{gibbons.2011}.

We did not use the permutation test. For the permutation test to converge for our data set, it requires tens of thousands permutations and it is really time consuming even with the proposed time-saving soft-thresholding method. The proposed method takes about a minute of computation in a desktop but ten-thousands permutations will take about seven days of computation. Hence, we used a much simpler Jackknife resampling technique. The procedure is validated using the simulation with the known ground truth.
{\tt MATLAB} codes for constructing network filtration, barcodes and performing statical inference on are given in \url{http://brainimaging.waisman.wisc.edu/~chung/barcodes} with the post-processed Jacobian determinant and FA data that was used for this study.

\noindent 
{\em Simulations.} We performed two simulations. In each simulation, the sample sizes are $n=20$ in Group 1 and $m=20$ in Group 2. There are $p=100$ nodes. In Group 1, data $x_{ij}$ at node $j$ for subject $i$ is simulated as independent standard normal $N(0,1)$ for the both simulations.\\
Study1 (no group difference): In Group 2, we simulated data $y_{ij}$ at node $j$ for subject $i$ as $y_{ij} = x_{ij} + N(0, 0.05^2)$. Tiny noise $N(0, 0.05^2)$ is added to perturb Group 1 data a little bit. It is expected there is no group difference. Following the proposed framework, we constructed the sparse correlation networks and constructed a Betti-plot. Then performed the Jackknife resampling and constructed 20 Betti-plots in each group. The rank sum test was applied and obtained the $p$-value of 0.78. So we could not detect any group difference as expected.\\
Study 2 (group difference):
We first simulate data as $y_{ij} =x_{ij} + N(0, 0.05^2)$ independently for all the nodes. Then for four nodes indexed by  $i=2, 3, 4, 5$, we introduce additional dependency: $y_{ij}=0.5 x_{i1} + N(0, 0.05^2).$
We added small noise to perturb the node values further. This dependency gives any connection between 1 to 5 to have high correlation. Figure \ref{fig:jackknife} shows the simulated correlation matrix. Following the proposed framework, we constructed the sparse correlation networks and constructed a Betti-plot. Then performed the Jackknife resampling and and constructed 20 Betti-plots in each group. The rank sum test was applied and obtained the $p$-value less than 0.001. This significance corresponds to the horizontal gap between the Betti-plots after the filtration value 0.7 (Figure \ref{fig:jackknife} right).

\section{Application}

\subsection{Imaging Data Set and Preprocessing}
The study consists of 23 children who experienced documented maltreatment early in their lives, and 31 age-matched normal control (NC) subjects. 
All the children were recruited and screened at the University of Wisconsin. The maltreated children were raised in institutional settings, where the quality of care has been documented as falling below the standard necessary for healthy human development. 
For the controls, we selected children without a history of maltreatment from families with similar current socioeconomic statuses.  The exclusion criteria include, among many others, abnormal IQ ($< 78$), congenital abnormalities (e.g., Down syndrome or cerebral palsy) and fetal alcohol syndrome (FAS). 
The average age for maltreated children was 11.26 $\pm$ 1.71 years while that of controls was 11.58 $\pm$ 1.61 years. This particular age range is selected since this development period is characterized by major regressive and progressive brain changes \cite{lenroot.2006,hanson.2013}. There are 10 boys and 13 girls in the maltreated group and 18 boys and 13 girls in the control group. Groups did not differ on age, pubertal stage, sex, or socio-economic status \cite{hanson.2013}. 
The average amount of time spent in institutional care by children was 2.5 years $\pm$ 1.4 years, with a range from 3 months to 5.4 years. 
Children were on average 3.2 years $\pm$ 1.9 months when they adopted, with a range of 3 months to 7.7 years. Table \ref{table1} summarizes the participant characteristics.

T1-weighted MRI  were collected using a 3T General Electric SIGNA scanner (Waukesha, WI), with a quadrature birdcage head coil. DTI were also collected in the same scanner using a cardiac-gated, diffusion-weighted, spin-echo, single-shot, EPI pulse sequence. The details on image acquisition parameters are given in \cite{hanson.2013}. Diffusion tensor encoding was achieved using twelve optimum non-collinear encoding directions with a diffusion weighting of 1114 s/mm$^2$ and a non-DW T2-weighted
reference image. Other imaging parameters were TE = 78.2 ms, 3 averages (NEX: magnitude
averaging), and an image acquisition matrix of 120 $\times$ 120 over a field of view of 240 $\times$ 240 mm$^2$. 
To minimize field inhomogeneity and image artifacts, high order shimming and fieldmap images were collected using a pair of non-EPI gradient echo images at two echo times: TE1 = 8 ms and TE2 = 11 ms. For MRI, a study specific template was constructed using the diffeomorphic shape and intensity averaging technique through Advanced Normalization Tools (ANTS) \cite{avants.2008}. Image normalization of each individual image to the template was done using symmetric normalization with cross-correlation as the similarity metric. 
Then the Jacobian determinants of the inverse deformations from the template to individual subjects were computed at each voxel. The Jacobian determinants measure  the amount of voxel-wise change from the template to the individual subjects. White matter was also segmented into tissue probability maps using template-based priors, and registered to the template \cite{bonner.2012}. For DTI, images were corrected for eddy current related distortion and head motion via FSL software (http://www.fmrib.ox.ac.uk/fsl) and distortions from field inhomogeneities were corrected using custom software based on the method given in \cite{jezzard.1999} before performing a non-linear
tensor estimation using CAMINO \cite{cook.2006}. Subsequently, we have used iterative tensor  image registration strategy given in \cite{zhang.2007} and \cite{joshi.2004} for spatial normalization. 
Then Fractional anisotropy (FA) were calculated for diffusion tensor volumes diffeomorphically registered to the study specific template.

The proposed methods have been applied to MRI and DTI of stress-exposed children in characterizing the white matter structural differences against the normal controls.

\begin{table}
\begin{center}
\begin{tabular}{ccc}
\hline
&  Maltreated & Normal controls\\
\hline
Sample size & 23 & 31\\
Sex (males) & 10 & 18 \\
Age (years) & 11.26 $\pm$ 1.71 & 11.58 $\pm$ 1.61 \\
Duration (years) &  2.5 $\pm$ 1.4 (0.25 to 5.4)& \\
Time of adoption (years) & 3.2 $\pm$ 1.9 (0.25 to 7.7)&\\
\hline
\end{tabular}
\end{center}
\caption{Study participant characteristics}
\label{table1}
\end{table}

\begin{figure}[t]
\centering
\includegraphics[width=1\linewidth]{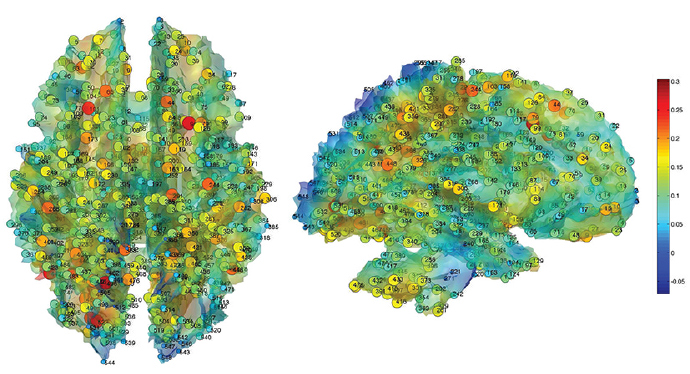}
\caption{548 uniformly sampled nodes along the white matter surface where the sparse correlations and covariances are computed. The nodes are sparsely sampled on the template surface to guarantee there is no spurious high correlation due to proximity between nodes. 
Color scales are the Jacobian determinant of a subject. The same nodes are taken in both MRI and DTI to check the consistency  between the results.}
\label{fig:tstat}
\end{figure}

\subsection{Results: Proposed Sparse Correlation}

We threshold the white matter density at 0.7 and obtained the isosurface. The resulting isosurface is not the gray and white matter tissue boundary and it is located inside the white matter. We are interested in the white matter changes along the tissue boundary. The surface mesh has 189536 mesh vertices and the average inter-nodal distance of 0.98mm. Since Jacobian determinant and FA values  at neighboring voxels are highly correlated, 0.3$\%$ of the total mesh vertices are uniformly sampled to produce $p=548$ nodes.  This gives average inter-nodal distance of 15.7mm, which is large enough to avoid spurious high correlation between two adjacent nodes (Figure \ref{fig:tstat}). The isosurface of the white matter template was extracted using the marching cube algorithm \cite{lorensen.1987}. The number of nodes are still larger than most region of interest (ROI) approaches in MRI and DTI, which usually have around 100 regions \cite{zalesky.2010}.  
This resulted in $548 \times 548$ sample covariances and correlation matrices, which are not full rank. 
We constructed the sparse correlation based network filtrations from the soft-thresholding method using Theorem 3 (Figure \ref{fig:correlation-black}). Subsequently,  Theorem 4 is used to plot the corresponding Betti-plots (Figure \ref{fig:VBM-stroke}).
Since each group produces one Betti-plot, the leave-one-out Jackknife resampling technique was performed to produce 23 and 31 resampled Betti-plots respectively for the two groups. We then computed the areas under the Betti-plots. Using the rank-sum test, we detected the statistical significance of  the area differences between the groups ($p$-value $<$ 0.001). The Betti-plots for normal controls show much higher Betti numbers at any given threshold. 

\begin{figure*}[t]
\centering
\includegraphics[width=0.9\linewidth]{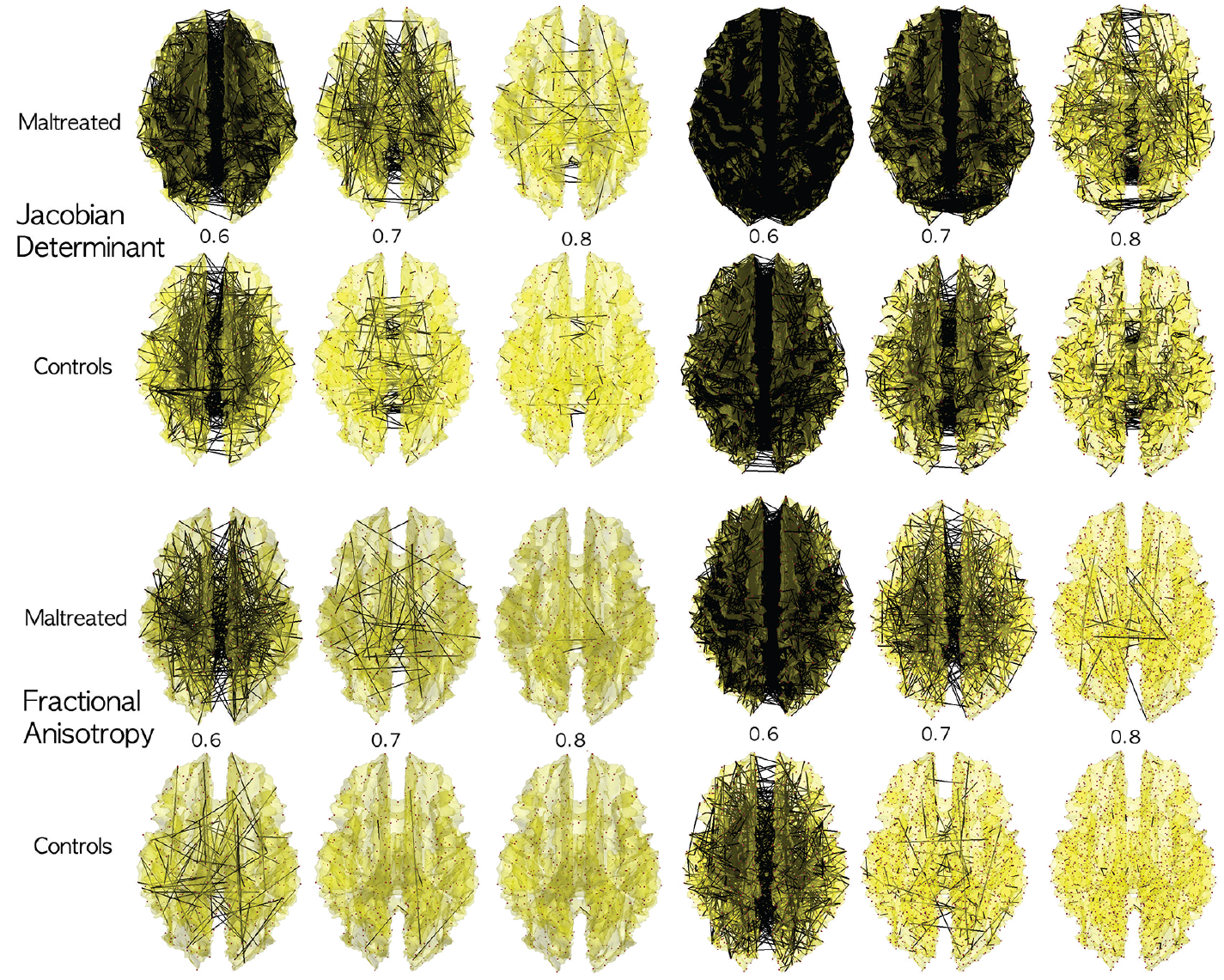}
\caption{Networks $\mathcal{G}(\lambda)$ obtained by thresholding  sparse correlations for the Jacobian determinant from MRI and fractional anisotropy (FA) from DTI at different $\lambda$ values ($\lambda = 0.6, 0.7, 0.8$) for 548 nodes (left three columns) 1856 nodes (right three columns). The collection of the thresholded graphs forms a filtration. The children exposed to early life stress and maltreatment show more dense network at the given $\lambda$ value. Since the maltreated children are more homogenous in the white matter region, there are more dense high correlations between nodes. The over all pattern of dense connections in the maltreated children is similar between the networks of different node sizes and across the different imaging modalities.}
\label{fig:correlation-black}
\end{figure*}

{\em Biological Interpretation.} 
In the Betti-plots (Figures \ref{fig:VBM-stroke}), we obtain more disconnected components for controls than for children in the early stress group for any specific $\lambda$ value. It can only happen if Jacobian determinants show higher correlations in the maltreated children across the white matter compared to the controls.  So when thresholded at a specific correlation value, more edges are preserved in the maltreated children resulting in decreased number of disconnected components. Thus, the children exposed to early life stress and maltreatment show more dense network at a given $\lambda$ value. This is clearly demonstrated visually in Figure \ref{fig:correlation-black}.  If the variations of Jacobian determents are similar across voxels, we would obtain higher correlations. This suggests more anatomical homogeneity across whole brain white matter in the maltreated children. Our finding is consistent with the previous study on neglected children that shows disrupted white matter organization, which results in more diffuse connections between brain regions \cite{hanson.2013}. Lower white matter directional organization in white matter may correspond to the increased homogeneity of Jacobian determinants and FA-values across the brain regions. Similar experiences may cause some areas to be connected to other regions of the brain at a higher degree inducing higher homogeneity in the regions. This type of relational  interpretation can be obtained from the traditional univariate TBM at each voxel. 

\begin{figure*}[t]
\centering
\includegraphics[width=0.85\linewidth]{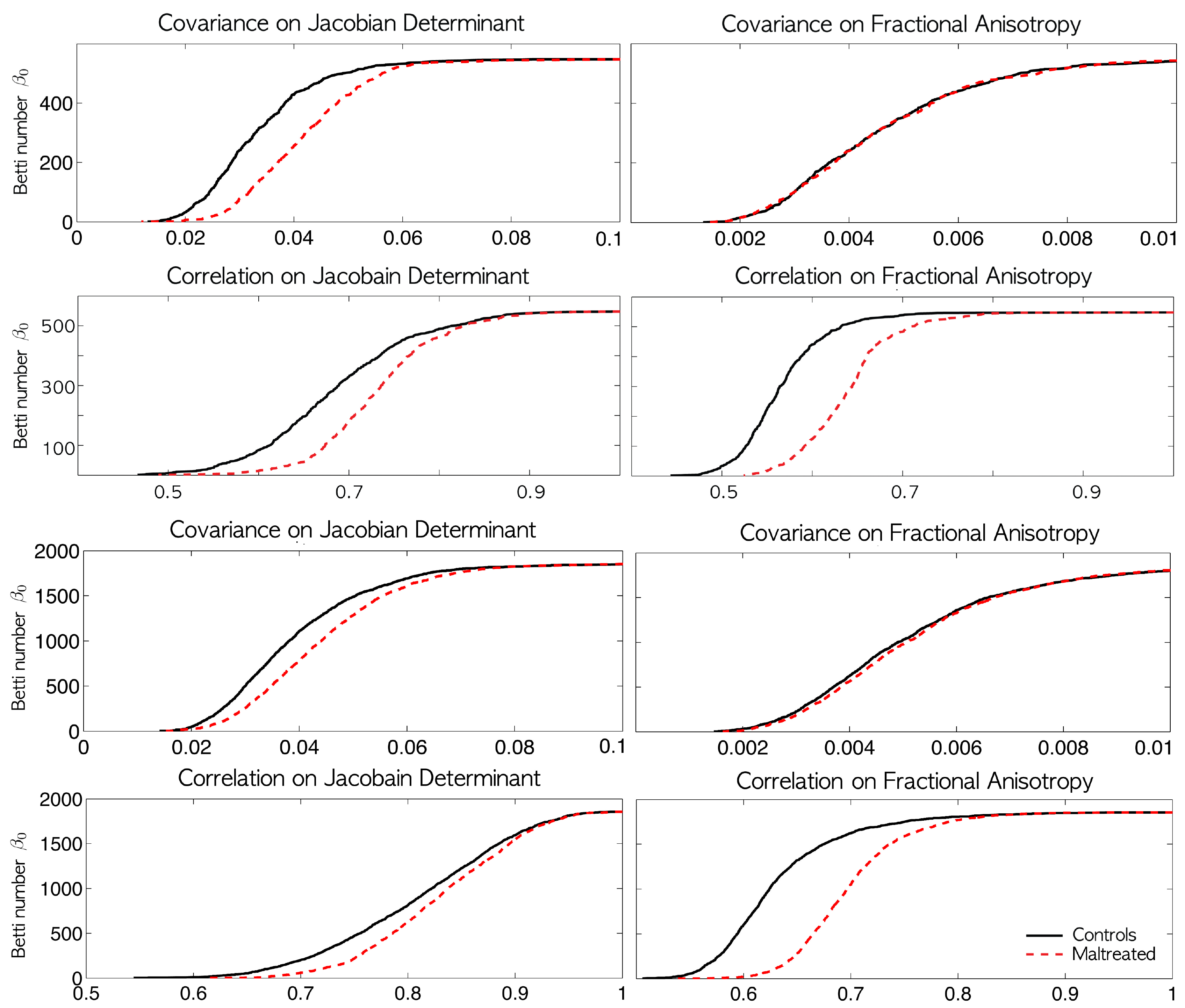}
\caption{The Betti-plots on the sparse covariance and the proposed sparse correlation for Jacobian determinant (left column) and FA (right column) on 548 (top two rows) and 1856 (bottom two rows) node studies. Unlike the sparse covariance, the sparse correlation seems to shows huge group separation between normal and stress-exposed children visually. However, in all 7 cases except top right (548 nodes covariance for FA), we detected statistically significant differences using the rank-sum test on the areas under the Betti-plots ($p$-value $< 0.001$).  The shapes of Betti-plots are consistent between the studies with different node sizes indicating the robustness of the proposed method over changing number of nodes.}
\label{fig:VBM-stroke}
\end{figure*}

\subsection{Comparison Against Sparse Covariance}
We compared the performance of the proposed sparse correlation method to the existing sparse (inverse) covariance method via the penalized log-likelihood \cite{banerjee.2006,banerjee.2008,friedman.2008,
huang.2010,mazumder.2012}, where the log-likelihood  is  regularized with a L1-norm penalty:
\bqn L({\bf \Sigma}) = \log \det {\bf \Sigma} -  \mbox{ tr} \Big(  {\bf \Sigma} S\Big) - \lambda \| {\bf \Sigma} \|_1 \label{eq:banerjee}.\eqn
${\bf \Sigma} = (\sigma_{ij})$ is the covariance matrix and $S$ is the sample covariance matrix. 
$ \| \cdot \|_1$ is the sum of the absolute values of the elements. 
The penalized log-likelihood is maximized over the space of all possible symmetric positive definite matrices. (\ref{eq:banerjee}) is a convex problem and it is numerically optimized using the graphical-LASSO (GLASSO) algorithm \cite{banerjee.2006,banerjee.2008,friedman.2008,huang.2010}. The tuning parameter $\lambda > 0$ controls the sparsity of the off-diagonal elements of the covariance matrix. By increasing $\lambda>0$, the estimated covariance matrix becomes more sparse.

We also  performed the graph filtration technique to the estimated sparse covariance matrix $\widehat{\bf \Sigma} = (\widehat{\sigma}_{ij})$. Let $A= (a_{ij})$ be the adjacency matrix defined from the estimated sparse covariance:
\bqn a_{ij}(\lambda) = 
\begin{cases}
1 &\; \mbox{  if  } \widehat{\sigma}_{ij} \neq 0;\\
0 & \; \mbox{ otherwise.}
\end{cases} \label{eq:Aadj}
\eqn
The adjacency matrix $A$ induces graph $\mathcal{G}(\lambda)$ consisting of $\kappa(\lambda)$ number of partitioned subgraphs: 
\bqn \mathcal{G}(\lambda) = \bigcup_{l=1}^{\kappa(\lambda)} 
G_l (\lambda) \; \mbox{   with } \; G_l =\{ V_l(\lambda), E_l(\lambda) \}, \label{eq:graphG}\eqn
where $V_l$ and $E_l$ are vertex and edge sets of the subgraph $G_l$ respectively. Unlike the sparse correlation case, we do not have full persistency on the induced graph $\mathcal{G}$. The partitioned graphs can be proven to be partially nested in a sense that only the partitioned node sets are persistent  \cite{chung.2013.MICCAI,huang.2010,mazumder.2012}, i.e.
\bqn
V_l(\lambda_1) \supset V_l(\lambda_2) \supset V_l(\lambda_3)  \supset \cdots \label{eq:Vnested}
\eqn
for $\lambda_1 < \lambda_2 < \lambda_3 < \cdots$ and all $l$. 
Subsequently the collection of partitioned vertex set $\mathcal{V}(\lambda )  = \bigcup_{l=1}^{\kappa(\lambda)} V_l (\lambda)$ is also persistent. On the other hand, edge sets $E_l$ may not be persistent. It is unclear if there exists a unique maximal filtration on the vertex set.

The maximal filtration can be obtained as follows. Let  $B(\lambda) = (b_{ij})$ be another adjacency matrix given by
\bqn b_{ij}(\lambda) = 
\begin{cases}
1 &\; \mbox{  if } | \widehat{s}_{ij} | > \lambda;\\
0 & \; \mbox{ otherwise.}
\end{cases} \label{eq:Badj},\eqn
where $\widehat{s}_{ij}$ is the sample covariance matrix. It can be shown that the adjacency matrix $B$ similarly induces graph $\mathcal{H}$  \cite{chung.2013.MICCAI,mazumder.2012}: 
\bqn \mathcal{H}(\lambda) = \bigcup_{l=1}^{\kappa(\lambda)} 
H_l (\lambda) \; \mbox{   with } \; H_l =\{ V_l(\lambda), F_l(\lambda) \} \label{eq:graphH}\eqn
for some edge set $F_l(\lambda)$. 
The subgraphs $G_l$ and $H_l$ have identical vertex set but different edge sets. Then from Theorem \ref{theorem:maximal}, we have maximal filtration on the graph $\mathcal{H}$, where the edge weights are given by the sample covariances. 
Theorem \ref{theorem:maximal} requires the edge weights to be all unique, which is satisfied for the study data set. 
Then similar to Theorem \ref{theorem:PHG}, the Betti-plots are determined by ordering the entries of the sample covariance matrices. The resulting barcode is displayed in Figure \ref{fig:VBM-stroke}. The sparse covariance was also able to discriminate the groups statistically ($p$-value $< 0.001$). The changes in the first Betti number are occurring in a really narrow window but was still able to detect the group difference using the areas under the Betti number plots (Figure \ref{fig:VBM-stroke}). However, the sparse correlations exhibit slower changes in the Betti number over the wide window, making it easier to discriminate the groups.

\subsection{Comparison Against Fractional Anisotropy in DTI}

For children who suffered early stress, white matter microstructures have been reported as more diffusely organized \cite{hanson.2013}. Therefore we predicted less white matter variability in both the Jacobian determinants and also in fractional anisotropy (FA) values as well. The DTI acquisitions were done in the same 3T GE SIGNA scanner; acquisition parameters can be found in  \cite{hanson.2013}. We applied the proposed persistent homological method in obtaining the filtrations for sparse correlations and covariances in the same 548 nodes on FA values (Figure \ref{fig:tstat}). The resulting filtration patterns show similar patterns of a rapid increase in disconnected components for sparse correlations (Figure \ref{fig:correlation-black} and \ref{fig:VBM-stroke}).  
The Jackknife resampling followed by the rank-sum test on the area differences shows  a significant group difference for sparse correlations ($p$-value $<$ 0.001). These results are due to a consistent abnormality among the stress-exposed children that is observed in both MRI and DTI modalities. The stress-exposed children exhibited stronger white matter homogeneity and less spatial variability compared to normal controls in both MRI and DTI measurements. However, the covariance results fail to discriminate the groups at $0.01$ level ($p$-value $=$ 0.043) indicative of a poor performance compared to the sparse correlation method.

\subsection{Robustness on Node Size Changes}

Depending on the number of nodes, the parameters of graph vary considerably up to 95$\%$ and the resulting statistical results will change substantially \cite{fornito.2010, gong.2009,zalesky.2010}. On the other hand,  the proposed method is very robust under the change of node size. For the node sizes between 548 and 1856 (0.3$\%$ and 1$\%$ of original 189536 mesh vertices), the choice of node size did not affect the pattern of graph filtrations, the shape of Betti-plots, or the subsequent statistical results significantly. For example, the graph filtration on 1856 nodes shows a similar pattern of dense connections for the maltreated children (Figure \ref{fig:correlation-black}). The resulting Betti-plots also show similar pattern of the group separation (Figure \ref{fig:VBM-stroke}). The statistical results are also somewhat consistent. For both the Jacobian determinant and FA values, the group differences in Betti-plots obtained from sparse correlations and covariances are all statistically significant ($p$-value $< 0.001$) in both 548 and 1856 nodes except one case. For the case of the 548 nodes covariance on FA values, we did not detect any group differences at 0.01 level ($p$-value = 0.043). On the other hand, we detected the group difference for the 1856 nodes case at 0.001 level. The proposed framework is very sensitive, so it can detect really narrow but very consistent Betti-plot differences.

\subsection{Effect of Image Registration}
We checked how much impact image registration has on the robustness of the proposed method. Anatomical measurements across neighboring voxels are highly correlated within white matter so we do not expect image misalignment will have much effect on the final results.
To determine the variability associated with the image registration, the displacement vector fields from the template to individual brains were randomly perturbed by adding Gaussian noise $N(0, 1)$ to each component. This is sufficiently large noise and causes up to 4mm misalignment for some nodes. Then following the proposed pipeline, the Jacobian determinants are correlated across 548 nodes and Betti-plots are computed. Figure \ref{fig:perturbation} shows five perturbation results. The thick line is without any perturbation. The perturbed Betti-plots are very stable and close to the Betti-plots without any perturbation (thick lines). Th height differences in the perturbed Betti-plots are less than $4.4\%$ in average, which is negligible in the subsequent analysis. In fact, the resulting $p$-values are similar to each other and all the perturbed results detected the group difference ($p$-value $<$ 0.001). Thus, we conclude that the proposed topological framework is robust under image misalignment.

\begin{figure*}[t]
\centering
\includegraphics[width=1\linewidth]{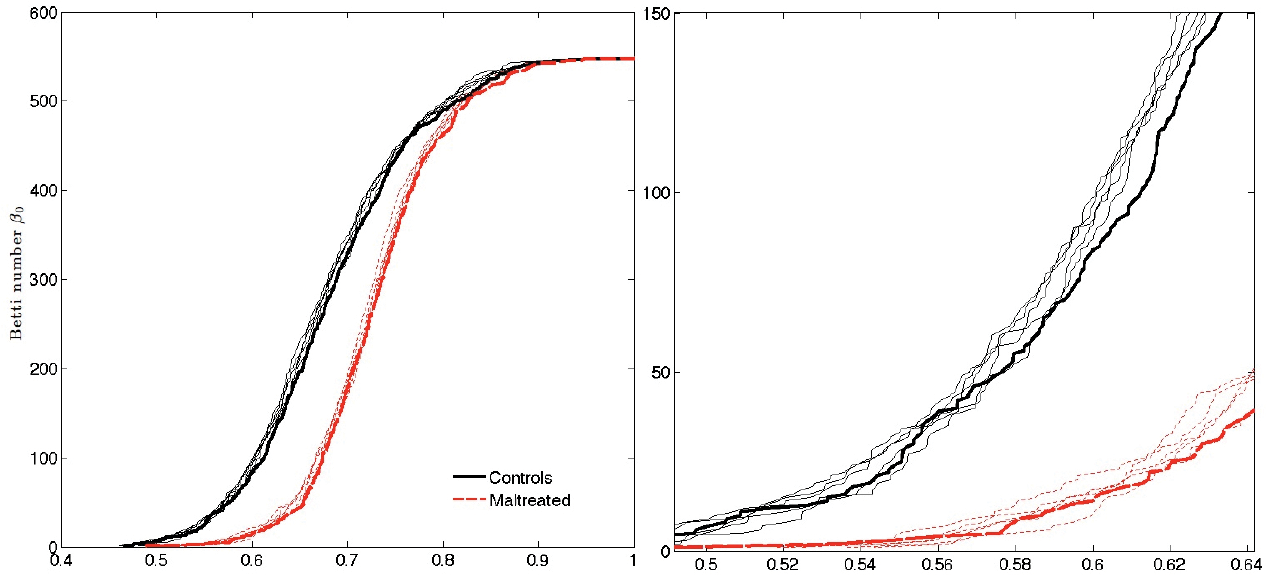}
\caption{The displacement vector field from the template to individual brain is randomly perturbed. Then the Jacobian determinants are correlated across 548 nodes and Betti-plots are subsequently produced. The process is repeated five times to produce five perturbed Betti-plots. The thick line is without any perturbation. The perturbed Betti-plots are very stable and close to the Betti-plots without any perturbation (thick lines). The proposed topological framework is very robust under sufficiently large image misalignment. Right figure is the enlargement of the left figure.}
\label{fig:perturbation}
\end{figure*}

\section{Conclusions and Discussions}

By identifying persistent homological structures in sparse Pearson correlation, we were able to exploit them for speeding up computations.  A procedure that usually takes 56 hours was completed in few seconds without utilizing additional computational  resources. Although we have only shown how to identify persistent homology in the sparse Pearson correlation, the underlying principle can be directly applicable to other sparse models and image filtering techniques. These include the least angle regression (LARS) implementation in more general LASSO \cite{carroll.2009}, heat kernel smoothing  \cite{chung.2010.ni}, and diffusion wavelets  \cite{kim.2012.NIPS}, which all guarantee the nested subset structure over the sparse parameters and kernel bandwidth. We will leave the identification of persistent homology in other frameworks for future studies.

We found that Betti-plots on correlations can visually discriminate better than Betti-plots on covariances. 
In Figure \ref{fig:VBM-stroke}, almost all topological changes associated with the covariance occur in really small range between 0 and 0.1. However, unlike covariances, correlations are normalized by the variances so the topological changes are more spread out between 0 and 1. This has the effect of making the Betti-plots shape differences spread out more uniformly and wide in the unit interval. This is most clearly demonstrated in the covariance {\em vs.} correlation on FA (second column). The Betti-plots of covariances are difficult to discriminate visually because the Betti-plots are squeezed into small range between 0 and 0.1 but the Betti-plots of correlations are more discriminative since the Betti-plots are more spread out. The visual discriminative power comes from the normalization associated with the Pearson correlation. The change in the metric affects the filtration process itself since it is based on the sorted edge weights. Subsequently, the shape of Betti-plots and the statistical inference results also change.

While massive univariate approaches can detect signal locally at each voxel, the proposed graph approach can detect signal globally over the whole brain region. Even though the information obtained by the two methods are complementary, they are somewhat exclusive. The proposed approach tabulates the changes of the number of connected components in the thresholded networks via Betti-plots, which cannot be done at individual node level. Therefore, there is no easy straightforward way of combining or comparing the results from the two methods.
The Betti-plots is a global index that is defined over a whole graph so it cannot be directly applicable  to node-level analysis. However, just like any global graph theoretic indices such as small-worldness and modularity \cite{rubinov.2009.ni, bullmore.2009}, it can be applied to subgraphs around a given node. Thus, it might be possible to measure logical topological characteristic around the node. This is the beyond the scope of the paper and we left it as future research.

This paper is not concerned with white matter anatomical connectivity. Here, we focus on a different issue, namely the degree of interregional dependency of image measurements such as Jacobian determinant and fractional anisotropy across brain regions. The proposed method is general enough to run on any type of volumetric imaging data that is spatially normalized. As an application of the proposed method, we were able to demonstrate developmental differences in brain development among stress-exposed children, who are at known risk for cognitive delays. Our Jacobian determinant results are consistent with DTI. 

There are recent fMRI studies showing head motion to introduce systematic biases in functional connectivity \cite{power.2012,satterthwaite.2012,van.2012}. Motion makes it appear as if long-range connections are weaker than they really are, and short-range connections are stronger than they really are \cite{deen.2012}. However, unlike fMRI, structural image volumes are acquired across such a long time frame, we do not expect the head motion to introduce spurious correlations. Further, we are also not aware of any study that establishes a relationship between head movement and maltreated children. We do not consider the head motion is a concern for our anatomical studies.

\section*{Acknowledgements}
\label{Acknowledgements}
This work was supported by NIH grants MH61285, MH68858, 
MH84051, 
NIH Fellowship DA028087, 
NIH-NCATS grant UL1TR000427 and the Vilas Associate Award. 
The authors like to thank Hyekyoung Lee of Seoul National University and  Matthew Arnold of University of Bristol for the valuable discussions on persistent homology and sparse regressions, and Seung-Goo Kim of Max Planck Institute and Nagesh Adluru of University of Wisconsin-Madison for help with image preprocessing. 
\bibliographystyle{plain}
\bibliography{reference.2015.02.22}

\end{document}